\newrobustcmd*{\squareA}[1]{\tikz{\filldraw[draw=#1,fill=#1] (0,-0)
rectangle (0.1cm,0.14cm);}}
\newrobustcmd*{\mycircle}[1]{\tikz{\filldraw[draw=#1,fill=#1] (0,-0.3) circle [radius=0.08cm];}}
\newcommand{\figr}{Fig.~}
\newcommand{\secr}{Sec.~}
\DeclareAcronym{AWGN}{short = AWGN ,long = additive white gaussian noise}
\DeclareAcronym{AoI}{short = AoI ,long = age of information}
\DeclareAcronym{AoII}{short = AoII ,long = age of incorrect information}
\DeclareAcronym{CDF}{short = CDF ,long = cumulative distribution function}
\DeclareAcronym{CRA}{short = CRA ,long = contention resolution ALOHA}
\DeclareAcronym{CRDSA}{short = CRDSA ,long = contention resolution diversity slotted ALOHA}
\DeclareAcronym{CSA}{short = CSA ,long = coded slotted ALOHA}
\DeclareAcronym{C-RAN}{short = C-RAN ,long = cloud radio access network}
\DeclareAcronym{DAMA}{short = DAMA ,long = demand assigned multiple access}
\DeclareAcronym{DSA}{short = DSA ,long = diversity slotted ALOHA}
\DeclareAcronym{eMBB}{short = eMBB ,long = enhanced mobile broadband}
\DeclareAcronym{FEC}{short = FEC ,long = forward error correction}
\DeclareAcronym{GEO}{short = GEO ,long = geostationary orbit}
\DeclareAcronym{GF}{short = GF ,long = generating function}
\DeclareAcronym{IC}{short = IC ,long = interference cancellation}
\DeclareAcronym{IoT}{short = IoT ,long = Internet of things}
\DeclareAcronym{IRSA}{short = IRSA ,long = irregular repetition slotted ALOHA}
\DeclareAcronym{LEO}{short = LEO ,long = low Earth orbit}
\DeclareAcronym{MAC}{short = MAC ,long = medium access}
\DeclareAcronym{mMTC}{short = mMTC ,long = massive machine-type communications}
\DeclareAcronym{MC}{short = MC ,long = Markov chain}
\DeclareAcronym{PDF}{short = PDF ,long = probability density function}
\DeclareAcronym{PER}{short = PER ,long = packet error rate}
\DeclareAcronym{PLR}{short = PLR ,long = packet loss rate}
\DeclareAcronym{PMF}{short = PMF ,long = probability mass function}
\DeclareAcronym{RA}{short = RA ,long = random access}
\DeclareAcronym{rv}{short = r.v. ,long = random variable}
\DeclareAcronym{SA}{short = SA , long = slotted ALOHA}
\DeclareAcronym{SIC}{short = SIC ,long = successive interference cancellation}
\DeclareAcronym{SNR}{short = SNR ,long = signal-to-noise ratio}
\DeclareAcronym{SFG}{short = SFG ,long = signal flow graph}
\DeclareAcronym{TDM}{short = TDM ,long = time division multiplexing}
\begin{document}

\title{\huge On the Uncertainty of a Simple Estimator for Remote Source Monitoring over ALOHA Channels}
\author{
\IEEEauthorblockN{Andrea Munari\\
\IEEEauthorblockA{Institute of Communications and Navigation, German Aerospace Center (DLR), Wessling, Germany
}}
}
\date{}
\maketitle
\thispagestyle{empty}
\pagestyle{empty}

\begin{abstract}
    Efficient remote monitoring of distributed sources is essential for many Internet of Things (IoT) applications. This work studies the uncertainty at the receiver when tracking two-state Markov sources over a slotted random access channel without feedback, using the conditional entropy as a performance indicator, and considering the last received value as current state estimate. We provide an analytical characterization of the metric, and evaluate three access strategies: (i) maximizing throughput, (ii) transmitting only on state changes, and (iii) minimizing uncertainty through optimized access probabilities. Our results reveal that throughput optimization does not always reduce uncertainty. Moreover, while reactive policies are optimal for symmetric sources, asymmetric processes benefit from mixed strategies allowing transmissions during state persistence.
\end{abstract}

\newtheorem{prop}{Proposition}
\newtheorem{lemma}{Lemma}
\newtheorem{remark}{Remark}

\DeclareRobustCommand{\stirling}{\genfrac\{\}{0pt}{}}

\newcommand{\pr}{\ensuremath{\mathsf P}}
\newcommand{\expOp}{\ensuremath{\mathbb E}}
\newcommand{\de}{\mathrm{d}}
\newcommand{\given}{\, | \,}

\newcommand{\pmc}{\ensuremath{q}}

\newcommand{\Mc}{\ensuremath{X}}
\newcommand{\Mcn}{\ensuremath{X_n}}
\newcommand{\Mcni}{\ensuremath{X_n^{(i)}}}
\newcommand{\Est}{\ensuremath{\hat{X}}}
\newcommand{\Estn}{\ensuremath{\hat{X}_n}}
\newcommand{\mc}{\ensuremath{x}}
\newcommand{\mcn}{\ensuremath{x_n}}
\newcommand{\mcni}{\ensuremath{x_n^{(i)}}}
\newcommand{\est}{\ensuremath{\hat{x}}}
\newcommand{\estn}{\ensuremath{\hat{x}_n}}
\newcommand{\Irt}{\ensuremath{W}}
\newcommand{\irt}{\ensuremath{w}}

\newcommand{\pTx}{\ensuremath{\lambda}}
\newcommand{\pTxVec}{\ensuremath{\bm\pTx}}
\newcommand{\avgPTx}{\ensuremath{\overbar\lambda}}
\newcommand{\avgAct}{\ensuremath{\rho}}
\newcommand{\ps}{\ensuremath{\mathsf{p_s}}}

\newcommand{\qZO}{\ensuremath{\alpha}}
\newcommand{\qOZ}{\ensuremath{\beta}}
\newcommand{\asymm}{\ensuremath{\eta}}
\newcommand{\statZ}{\ensuremath{\pi_0}}
\newcommand{\statO}{\ensuremath{\pi_1}}
\newcommand{\statZZ}{\ensuremath{\pi_{0,0}}}
\newcommand{\statOO}{\ensuremath{\pi_{1,1}}}
\newcommand{\statZO}{\ensuremath{\pi_{0,1}}}
\newcommand{\statOZ}{\ensuremath{\pi_{1,0}}}

\newcommand{\nodes}{\ensuremath{\mathsf m}}

\newcommand{\tru}{\ensuremath{\mathsf S}}
\newcommand{\load}{\ensuremath{\mathsf G}}
\newcommand{\Agen}{\ensuremath{\Delta_n}}
\newcommand{\agen}{\ensuremath{\delta_n}}
\newcommand{\ent}{\ensuremath{H}}
\newcommand{\condent}{\ensuremath{\mathsf h}}

\newcommand{\overbar}[1]{\mkern 1.5mu\overline{\mkern-3mu#1\mkern0.5mu}\mkern 1.5mu}
\newcommand{\overbara}[1]{\mkern 1.5mu\overline{\mkern0.1mu#1\mkern-0.1mu}\mkern 1.5mu}

\section{Introduction} \label{sec:intro}

Remote observation of distributed sources is key to many Internet of things (IoT) applications, ranging from agricultural and environmental monitoring, to industrial control and asset tracking \cite{Bedewy21_TIT,Yates20_Survey,Uysal21_Semantic}. In these contexts, a possibly large number of devices sample physical processes of interest, and deliver updates to a common receiver for further processing or decision making \cite{Pappas24_TCOM,Soleymani20_valueInfo}. To accommodate the sporadic and potentially unpredictable traffic generated by low-complexity terminals over a shared wireless channel, random access schemes based on variations of the ALOHA policy \cite{Abramson77:PacketBroadcasting} are typically employed, as epitomized by several commercial solutions \cite{LoRa,SigFox}.

The relevance of these applications has steered recent research towards understanding how the channel access policies shall be tuned, aiming to provide the monitor with an accurate estimate of the state of tracked processes. In this perspective, first steps were taken tackling freshness, captured by means of age of information (AoI) \cite{Yates17:AoI_SA,Modiano18_AoI,Munari21_TCOM_AoI,Uysal21_AlohaThresh,Bidokhti22_TIT,Badia22_NetwLetters,Badia24_TMC,Liew20_INFOCOM,Munari23_TCOM}. These studies were later refined to account for the actual status of the sources and the estimate available at the receiver, e.g., via age of incorrect information \cite{Ephremides19_AoII,Chiariotti23:ICC,Munari24_ICC,Chiariotti25_INFOCOM}, false alarm and missed detection probabilities \cite{Pappas24_MOBIHOC,Munari23_Asilomar}, as well as information-theoretic inspired \cite{Cocco23_JSAIT,Liew22_TIT,Pappas23_WiOpt} and other metrics \cite{Soleymani20_valueInfo,Pappas24_WiOpt,Yates20_Survey}.

In this work, we take the lead from two practical observations. Fist, many IoT networks operate without feedback, so that nodes are unaware of the outcome of their transmissions, e.g., \cite{LoRa}. Second, we note that control and monitoring tasks are typically performed at the application layer, which, in turn, may receive from lower layers a limited amount of information. In most systems, indeed, only messages generated by the tracked source would be forwarded to the application, possibly together with general information on the network configuration, e.g., number of registered nodes. This configuration hinders the realization of advanced estimators, e.g., \cite{Cocco23_JSAIT}, which exploit cross-layer information such as level of interference, lack of transmissions or reception of messages from other sources, to refine knowledge on the process. 

Based on this, we consider a basic solution, which simply retains the last received status update as current estimate \-- sometimes also referred to as martingale estimator \cite{Ulukus24_Martingale,Munari23_Asilomar}. Leaning on this, we study the problem of monitoring two-state Markov sources over a slotted random access channel without feedback, and characterize performance in terms of the uncertainty at the receiver given the available information, i.e., the last obtained message on the state of a process of interest and when such value was measured, corresponding to the AoI. Specifically, we focus on variations of ALOHA in which each terminal probabilistically transmits an update based on the current and previous value of the tracked process. We provide an analytical derivation of the conditional entropy at the receiver, and compare the behavior of three main strategies: a benchmark scheme where access probabilities are set to maximize throughput, regardless of the source evolution; a reactive approach in which nodes access the medium only when a change in source state takes place; and a balanced strategy, whose access probabilities are obtained to minimize the uncertainty metric. Our study reveals non-trivial insights, showing how optimizing channel throughput does not necessarily lead to a lower uncertainty at the receiver. Moreover, we show that, while for symmetric sources (i.e., with balanced transitions between the two states), a reactive policy is optimal, more complex solutions that also foresee transmission in the absence of source transitions are to be preferred when asymmetric processes are monitored.

\emph{Notation}: Matrices are indicated in capital boldface, e.g. $\mathbf A$, whereas column (row) vectors in lowercase boldface, e.g. $\mathbf v$ ($\mathbf v^{\mathsf T}$). We denote a discrete random variable (r.v.) and its realization by lower- and uppercase letters, e.g., $X$ and $x$. Its probability mass function is indicated as \mbox{$\mathsf P(X=x) = p(x)$}, and the conditional distribution of $X$ given $Y$ is expressed as $p(x\given y)$. For a homogeneous discrete time Markov chain $X_n$, $n\in\mathbb N$, with finite space state $\mathcal X$, $q_{ij}$ is the one-step transition probability from state $i$ to $j$, whereas $q_{ij}(\ell)$ is the $\ell$-step transition probability. Finally, $\mathbf 1_\ell$ is the $\ell$ dimensional column vector with all ones, and $\mathbf I_\ell$ the $\ell\times\ell$ identity matrix.

\section{System Model}
\label{sec:sysModel}

We study a wireless network composed of \nodes\ terminals (nodes). Time is divided in slots, and each node monitors an independent discrete-time, two-state, Markov chain taking values in $\{0,1\}$. We denote by $\Mcni$, $n\in\mathbb N$, the chain observed by terminal $i$. 
At the start of a slot, each process transitions between its states following the one-step probabilities reported in \figr\ref{fig:markovChains}a. For convenience, we introduce the \emph{asymmetry factor} $\asymm := \qZO/\qOZ$, capturing the ratio of the average time spent in state $1$ (i.e., $1/\qOZ$) with respect to state $0$ (i.e., $1/\qZO$). When $\asymm=1$, we speak of symmetric sources. The stationary distribution of the chain follows as $\statZ = \qOZ/(\qZO+\qOZ)$ and $\statO = 1-\statZ$.

Nodes share a wireless channel, and aim at reporting the state of the monitored sources to a common receiver (sink). Specifically, at the start of a slot, each terminal independently decides whether to transmit a packet, containing the current state of the Markov chain it observes. Accordingly, a slotted ALOHA protocol is implemented. No feedback is provided by the sink, and no retransmissions are performed by nodes. Following the well-established collision channel model \cite{Abramson77:PacketBroadcasting}, we assume that a slot over which two or more packets are sent (collision) does not allow retrieval of information at the sink, whereas successful decoding takes place whenever a single terminal transmits during a slot. 

In the remainder, we will focus on access schemes in which the transmission probability of a terminal is dictated by the previous and present state of the monitored process. We denote this quantity as $\pTx_{\mc_{n-1} \mcn}$, so that the access policy is fully specified by the vector $\pTxVec = [\pTx_{00},\pTx_{01},\pTx_{10},\pTx_{11}]$. We remark that, while $\pTxVec$ is the same for all nodes, the actual contention probability of each of them depends on the current evolution of its source. In view of this, the characterization of the number of terminals accessing the channel over a slot would require to jointly track all the Markov processes. Albeit conceptually viable, this approach soon becomes cumbersome as \nodes\ grows. We thus resort to an approximation, whose tightness is discussed in Sec.~\ref*{sec:results}, and model the success probability for a transmitted packet as
\begin{align}
    \ps := (1-\avgPTx)^{\nodes-1}
    \label{eq:ps}
\end{align}
where we have introduced the ancillary quantity \mbox{$\avgPTx := \statZ [ (1{-}\qZO) \pTx_{00} {+} \qZO \pTx_{01} ] + \statO [\qOZ \pTx_{10} {+} (1{-}\qOZ) \pTx_{11}]$}.
In other words, we consider an i.i.d. behavior for all the $\nodes-1$ terminals other than the sender of interest, assuming that each of them transmits with probability \avgPTx. This, in turn, captures the average access probability for a node in stationary conditions.

\begin{figure}    
    \subfloat[]{
        \includegraphics[width=.25\columnwidth]{./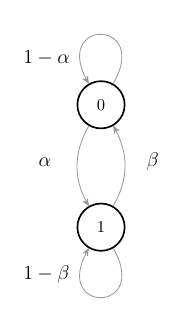}}
    \hspace*{.1em}
    \subfloat[]{
        \includegraphics[width=.7\columnwidth]{./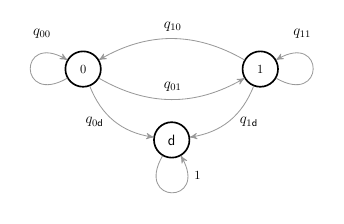}}
    \caption{(a) Markov chain \Mcn\ describing a monitored source; (b) terminating Markov chain $Y_n$ used to characterize $\ent(\Mcn\given\Agen,\Estn)$. The process enters the absorbing state $\mathsf d$ when an update from the reference node is received. Conversely, it moves between $0$ and $1$, describing the corresponding current source value, so long as no update message from the reference terminal arrives.}
    \label{fig:markovChains}
    \vspace{-1em}
\end{figure}

Within our system, the sink maintains an estimate $\Estn^{(i)}$ of the state of each monitored process. To this aim, we consider a simple solution, updating the estimate every time a message containing the current state of the source is received, and retaining the previous knowledge otherwise. Formally:
\begin{align}
    \!\!\!\!\Estn^{(i)} = 
    \begin{split}
    \begin{cases}
        \hspace*{.3em} \Mcni            \!\!\!& \text{ if node $i$ delivers message at slot $n$}\\[.3em]
        \hspace*{.3em} \Est_{n-1}^{(i)} \!\!\!& \text{ otherwise}
    \end{cases}
    \end{split}
    \label{eq:dh}
\end{align} 
with $\Estn^{(i)}\in\{0,1\}$. Without loss of generality, we will focus on the behavior of a reference source, and drop superscript $i$.
 
For the setting under study, we want to characterize the uncertainty of the sink on the state of the reference process. To this aim, we note that, at time $n$, a receiver implementing the estimator in \eqref{eq:dh} only has knowledge about (i) the last received update from the node of interest, i.e., \Estn, and (ii) the time elapsed since such message was retrieved. We denote the random process describing the latter as $\Agen \in \mathbb N_0$, and observe that \Agen\ corresponds the current AoI \cite{Yates20_Survey} at the sink. Indeed, each sent packet contains up to date information on the monitored source, 
and $\Agen$ is exactly the difference between the current time and the time stamp of the last received message. We assume that, upon reception, \Agen\ is reset to $0$.

A natural measure of the sink uncertainty at time $n$, for a given AoI-estimate pair $(\agen,\estn)$, is thus given by the entropy
\begin{align}
    \mathsf h(\agen,\estn) := \ent( \Mcn \,|\, \Agen = \agen, \Estn = \estn).
    \label{eq:cond_ent}
\end{align}
An example of the time evolution of $\condent(\agen,\estn)$ is reported in \figr\ref{fig:timeline}. 
The metric is reset to zero each time a message from the reference node is decoded. Instead, in the absence of updates, $\condent(\agen,\estn)$ tends to the stationary entropy of the source, $\mathsf H(X) = -\statZ \log_2 \statZ -\statO \log_2 \statO$. Finally, the peak shown in the plot denotes the higher uncertainty at the sink  experienced following reception of a message notifying of a source transition to the less likely state $0$. 

Leaning on this notation, we evaluate the average performance of the system in terms of the conditional entropy
\begin{align}
    \ent(\Mcn\,|\, \Agen,\Estn) = \sum_{\substack{\agen \in \mathbb N_0 \\ \estn \in \{0,1\} } } p(\agen,\estn)\, \condent(\agen,\estn).
    \label{eq:ent_def}
\end{align}

\vspace{.5em}
\textbf{Remark.} \emph{The considered estimator is only updated upon successful decoding of a message from the source of interest, and can be run at the application layer, without the need for any other cross-layer information exchange (e.g., presence of a collided or idle slot). 
    In the remainder, we will provide a framework to understand, as protocol designers, how the medium contention shall be tuned  to minimize the average uncertainty $\ent(\Mcn\given \Agen,\Estn)$. On the other hand, as clarified in \secr\ref{sec:analysis}, knowledge of network cardinality, access parameters, and source statistics, allows an application to base control and decisions on its current uncertainty computed via \eqref{eq:cond_ent}.}

    \begin{figure}
        \centering
        \includegraphics[width=.9\columnwidth]{./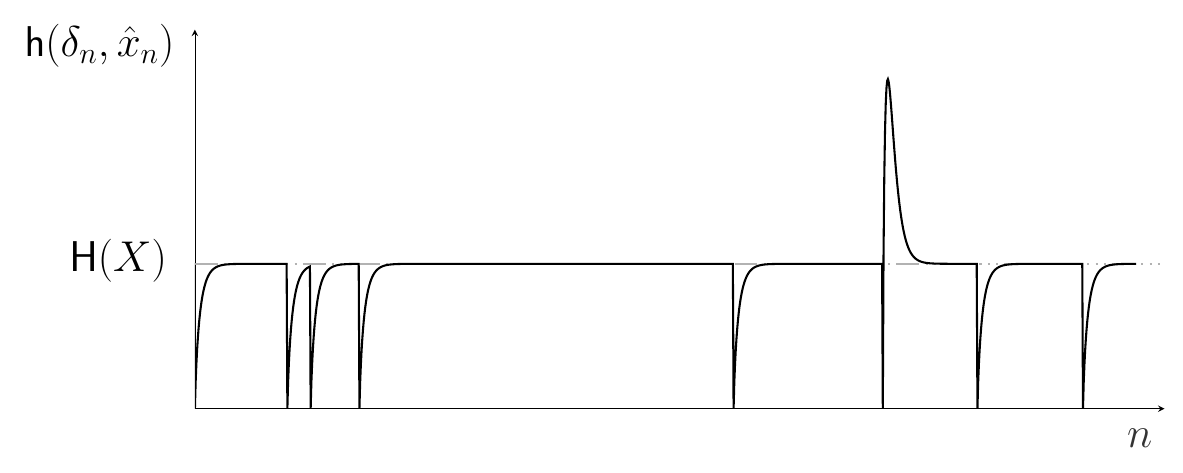}
        \caption{Example of time evolution of the entropy $\condent(\agen,\estn)$. In this case, $\qZO=0.1$, $\qOZ=0.01$, $\nodes=50$, $\pTx_{\mc_{n-1}\mcn} = 1/\nodes$, $\forall \, (\mc_{n-1},\mcn)$.}
        \vspace{-1em}
        \label{fig:timeline}
    \end{figure}
\section{Analysis}
\label{sec:analysis}

To characterize the uncertainty at the receiver, we will resort to the auxiliary terminating Markov chain $Y_n$ reported in \figr\ref{fig:markovChains}b, with state-space $\mathcal Y = \{0,1,\mathsf d\}$. The chain transitions between the two upper states so long as the sink receives no message from the reference node, with $0$ and $1$ denoting the actual current source value. In turn, the process enters the absorbing state $\mathsf d$ as soon as an update refreshing the receiver estimate is delivered. 
The one-step transition matrix for the process can be written as
\begin{align}
    \mathbf P = 
    \left(
        \begin{array}{cc|c}
            q_{00}  & q_{01} & q_{0\mathsf d}\\  
            q_{10}  & q_{11} & q_{1\mathsf d}\\[.3em]
            \hline
            0       & 0      & 1
        \end{array}
    \right)
     =
    \begin{pmatrix}
        \mathbf A & \mathbf a_{\mathsf d}\\
        \mathbf 0 & 1 \\
    \end{pmatrix}
\end{align}
where $\mathbf A$ is the $2\times 2$ matrix that captures transitions between $0$ and 1, and the $2\times 1$ vector $\mathbf a_{\mathsf d}$ contains the probability of being absorbed from each of the two states. In turn, the transition probabilities can be derived for the considered system model. For instance, focusing on state $0$, the chain will move to $1$ with probability $q_{01} = \qZO (1-\pTx_{01}\ps)$. Here, the first factor captures the fact that the source has to move to state $1$, whereas the second accounts for the lack of an update delivery from the terminal over the current slot (which would lead to absorption). Similarly, the chain remains in $0$ with probability \mbox{$q_{00}=(1-\qZO)(1-\pTx_{00}\ps)$}. Finally, if a message is successfully sent, the chain moves to $\mathsf d$, regardless of the state of the source, i.e., with overall probability $q_{0\mathsf d} = (\qZO\pTx_{01}+(1-\qZO)\pTx_{00})\ps$. The transitions from $1$ are immediately derived in the same manner and are not reported for the sake of compactness. 

The chain leads to a first result, captured in the following
\begin{prop} \label{prop1}The conditional probability of the reference source being in state \mcn\ given that its current AoI is \agen\ and the last received updated contained state \estn\ is 
    \begin{align}
        p(\mcn\given\agen,\estn) = \frac{ \mathbf e_{\estn}^{\mathsf T} \mathsf A^{\agen} \,\mathbf e_{\mcn}}{\mathbf e_{\estn}^{\mathsf T} \mathsf A^{\agen} \mathbf 1_2}
        \label{eq:pmfXnGivenDeltaXnHat}
    \end{align}
    where, for any $\estn$ and \mcn\ in  $\mathcal X$, $\mathbf e_{\estn}$ and $\mathbf e_{\mcn}$ are defined as \mbox{$\mathbf e_0 = [1,0]^{\mathsf T}$}; $\mathbf e_1 = [0,1]^{\mathsf T}$.
\end{prop}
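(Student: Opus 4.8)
\emph{Proof plan.} The plan is to read the conditioning event $\{\Agen = \agen, \Estn = \estn\}$ as a \emph{survival event} in the terminating chain $Y_n$ of \figr\ref{fig:markovChains}b. By the estimator rule \eqref{eq:dh} and the fact that every delivered packet carries the instantaneous source value, having AoI $\Agen=\agen$ and $\Estn=\estn$ means precisely that the last successful reception from the reference node occurred $\agen$ slots earlier, that at that instant the source was in state $\estn$, and that no further update has been delivered during the $\agen$ intervening slots. Appealing to the time-homogeneity of the joint source/channel process and to the renewal produced by a reception (the state is reset to $\estn$ and, under the i.i.d.\ interferer approximation behind \eqref{eq:ps}, the subsequent evolution is Markovian and independent of the past), the conditional law of $\Mcn$ given this event is the law of $Y_{\agen}$ given $Y_0=\estn$ and the event that $Y$ has not been absorbed in its first $\agen$ steps.

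Next I would invoke two elementary facts about the transient block $\mathbf A$ of $\mathbf P$. Since $\mathbf P$ is block upper-triangular with absorbing state $\mathsf d$, for transient states $i,j\in\{0,1\}$ one has $(\mathbf A^{\agen})_{ij}=\mathbf e_i^{\mathsf T}\mathbf A^{\agen}\mathbf e_j=\pr(Y_{\agen}=j\given Y_0=i)$, i.e.\ the probability of being in state $j$ after $\agen$ steps \emph{without} having been absorbed (equivalently, with no update delivered in those $\agen$ slots). Summing over the terminal state, $\mathbf e_i^{\mathsf T}\mathbf A^{\agen}\mathbf 1_2=\pr(Y_{\agen}\in\{0,1\}\given Y_0=i)$ is the probability of surviving $\agen$ consecutive slots with no delivery, starting from state $i$. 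Applying the definition of conditional probability,
\begin{align}
    p(\mcn\given\agen,\estn)=\frac{\pr\!\big(\Mcn=\mcn,\ \text{no delivery in the }\agen\text{ slots}\ \big|\ \text{last delivered value }\estn\big)}{\pr\!\big(\text{no delivery in the }\agen\text{ slots}\ \big|\ \text{last delivered value }\estn\big)},
\end{align}
and substituting the two quantities above with $i=\estn$ and $j=\mcn$ yields exactly \eqref{eq:pmfXnGivenDeltaXnHat}. The degenerate case $\agen=0$ is consistent, since $\mathbf A^0=\mathbf I_2$ forces $p(\mcn\given 0,\estn)$ to equal $1$ when $\mcn=\estn$ and $0$ otherwise, as it must.

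The main obstacle is the first, modelling step rather than the algebra: one has to argue carefully that the pair $(\agen,\estn)$ is \emph{informationally equivalent} to ``$Y$ started in $\estn$ and survived $\agen$ steps,'' in particular that knowing the exact value of the AoI adds nothing beyond the count of survived slots. This rests on the strong Markov property applied at the last reception instant (so that the reception event itself, beyond pinning $X_{n-\agen}=\estn$, does not bias the future), on the independence of the per-slot transmission/success coins across slots, and on the stationarity assumption underpinning \eqref{eq:ps} that makes the conditional distribution independent of $n$. Once this equivalence is granted, the remainder is routine transient-chain bookkeeping, together with checking that the entries of $\mathbf A$ (e.g.\ $q_{00}=(1-\qZO)(1-\pTx_{00}\ps)$ and $q_{01}=\qZO(1-\pTx_{01}\ps)$, and symmetrically from state $1$) correctly encode ``the source transitions, and then no successful delivery occurs in that slot.''
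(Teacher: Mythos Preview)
Your proposal is correct and follows essentially the same route as the paper: identify the event $\{\Agen=\agen,\Estn=\estn\}$ with starting the terminating chain $Y$ in $\estn$ and surviving $\agen$ steps, then obtain numerator and denominator from the transient block via $(\mathbf A^{\agen})_{\estn,\mcn}$ and $\mathbf e_{\estn}^{\mathsf T}\mathbf A^{\agen}\mathbf 1_2$. Your extra care about the modelling step (strong Markov at the last reception, i.i.d.\ success coins, stationarity behind \eqref{eq:ps}) and the sanity check at $\agen=0$ go slightly beyond what the paper spells out, but the argument is the same.
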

\begin{proof}
We observe that, for the Markov chain of \figr\ref{fig:markovChains}b, the $\ell$-step transition probability from $i$ to $j$, with $i,j \in \{0,1\}$, provides the joint distribution of the source being in state $j$ and of not having delivered an update over the last $\ell \geq 1$ slots, given that its state $\ell$ slots ago was $i$. This corresponds to having an estimate $i$, content of the last received message, and a current AoI value $\ell$. By the definition of conditional probability, the sought distribution follows as
\begin{align}
    p(\mcn\given\agen,\estn) = \frac{q_{\estn \mcn}(\agen)}{p(\agen\given\estn)}.
\end{align}
The numerator is directly given by the \estn-row, \mcn-column element of the \agen-step transition matrix $\mathbf P^\agen$ of the chain. By the structure of $\mathbf P$, it is immediate to verify that this corresponds to $\mathbf e_{\estn}^{\mathsf T} \mathsf A^{\agen} \,\mathbf e_{\mcn}$. On the other hand, 
$p(\agen\given\estn)$ can be derived as the probability of the chain not to transition to state $\mathsf d$ for \agen\ steps having started in \estn. This evaluates to $\sum\nolimits_{y\in\mathcal Y\setminus\{\mathsf d\}} q_{\estn y}(\agen) = q_{\estn 0}(\agen)  + q_{\estn 1}(\agen)$, where both addends are again obtained as elements of matrix $\mathbf A^{\agen}$.
\end{proof}
The result allows then to evaluate the performance at the receiver given the current conditions in terms of AoI and estimate, resorting to the definition in \eqref{eq:cond_ent}.
\begin{lemma}
    The receiver uncertainty on \Mcn, given \agen\ and \estn\ can be computed for any channel access strategy $\bm \lambda$ as
        \begin{align}
        \condent(\agen,\estn) = -\sum\nolimits_{\mcn} p(\mcn\given\agen,\estn) \log_2 p(\mcn\given\agen,\estn)
    \end{align}
    where $p(\mcn\given\agen,\estn)$ is obtained through \eqref{eq:pmfXnGivenDeltaXnHat}.
\end{lemma}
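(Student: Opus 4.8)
The claim follows immediately by combining Proposition~\ref{prop1} with the definition of conditional entropy, so the plan is short. First I would recall that, by \eqref{eq:cond_ent}, $\condent(\agen,\estn)$ is defined as $\ent(\Mcn \given \Agen=\agen,\Estn=\estn)$, i.e., the Shannon entropy of the conditional law of $\Mcn$ given the event $\{\Agen=\agen,\Estn=\estn\}$. Since $\Mcn$ takes values in the two-element set $\mathcal X=\{0,1\}$, the standard definition of the entropy of a discrete distribution gives directly
\begin{align}
\condent(\agen,\estn) = -\sum\nolimits_{\mcn\in\{0,1\}} p(\mcn\given\agen,\estn)\,\log_2 p(\mcn\given\agen,\estn),
\end{align}
with the usual convention $0\log_2 0 = 0$. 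It then remains only to substitute the closed-form conditional pmf \eqref{eq:pmfXnGivenDeltaXnHat} from Proposition~\ref{prop1} in place of $p(\mcn\given\agen,\estn)$, which expresses $\condent(\agen,\estn)$ explicitly through the matrix $\mathbf A$, and hence through the access vector $\bm\lambda$. The fact that this holds for \emph{any} strategy $\bm\lambda$ is inherited with no extra work from Proposition~\ref{prop1}, which was established without restrictions on $\bm\lambda$.

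The single point that needs care — and the only real obstacle, such as it is — is to confirm that the conditioning event has positive probability for every admissible pair $(\agen,\estn)$, so that the conditional distribution, and therefore its entropy, is well defined. This reduces to checking that the denominator $\mathbf e_{\estn}^{\mathsf T}\mathsf A^{\agen}\mathbf 1_2 = p(\agen\given\estn)$ appearing in \eqref{eq:pmfXnGivenDeltaXnHat} is strictly positive. I would argue this from the structure of $\mathbf A$: its entries, as computed in the derivation of $\mathbf P$ preceding Proposition~\ref{prop1} (e.g.\ $q_{00}=(1-\qZO)(1-\pTx_{00}\ps)$, $q_{01}=\qZO(1-\pTx_{01}\ps)$, and symmetrically for the second row), are non-negative, and $\mathbf A$ is entrywise strictly positive whenever $\qZO,\qOZ\in(0,1)$ and the success-weighted access probabilities are not identically one. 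Consequently $\mathsf A^{\agen}$ is entrywise positive for all $\agen\ge 0$ (and for $\agen=0$ it equals $\mathbf I_2$, recovering $\condent(0,\estn)=0$ as a sanity check), so $p(\agen\given\estn)>0$ and \eqref{eq:pmfXnGivenDeltaXnHat} is a bona fide pmf on $\{0,1\}$. Its entropy is then precisely $\condent(\agen,\estn)$, which completes the argument; no further computation is required.
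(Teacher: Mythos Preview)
Your proposal is correct and matches the paper's approach: the paper does not even supply a proof for this lemma, treating it as an immediate consequence of the definition in \eqref{eq:cond_ent} together with Proposition~\ref{prop1}. Your additional care about the positivity of the conditioning event is a reasonable technical check, but it goes beyond what the paper deems necessary.
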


Let us now turn our attention to the derivation of the conditional entropy $\ent(\Mcn\given\Agen,\Estn)$ in \eqref{eq:ent_def}, which further requires the joint distribution of the current AoI and estimate available at the receiver at a general time slot $n$, i.e., $p(\agen,\estn) = p(\agen\given\estn) p(\estn)$. In the following, we streamline the steps for its computation through Prop. \ref{prop:condXn} and \ref{prop:statEst}. In turn, these results require a preliminary characterization of the inter-refresh time. Specifically, let us denote by \Irt\ the stochastic process describing the duration between two successive message receptions from the reference source, i.e., between two estimate updates at the sink. With this definition, we have:
\begin{prop}
Conditioned on the current estimate available at the receiver, the probability distribution of the process $W$ and its expected value are given by
\begin{align}
    p(\irt\given\estn) = \mathbf e_{\estn}^{\mathsf T} \mathbf A^{\irt-1} \, \mathbf a_{\mathsf d}
    \label{eq:condPMfW}
\end{align}
\begin{align}
    \mathbb E[\Irt \given \Estn=\estn] = \mathbf e_{\estn}^{\mathsf T} (\mathbf I_2 - \mathsf A)^{-1} \, \mathbf 1_2
    \label{eq:avgW}
\end{align}
where $\mathbf e_{\estn}$, $\estn\in\mathcal X$, is defined as in Prop.\ref{prop1}.
\end{prop}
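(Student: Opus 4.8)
The plan is to identify $\Irt$, conditioned on the content $\estn$ of the refreshing message, with the time to absorption of the terminating chain $Y_n$ of \figr\ref{fig:markovChains}b when it is initialised in the transient state $\estn$. The justification is that immediately after an update is decoded at some slot, the sink holds $\Estn = \Mcn = \estn$, so the source value at that instant is $\estn$; from there on, by construction, $Y_n$ tracks the pair (current source value, no-update-delivered-yet) slot by slot, and the occurrence of the next reception is exactly the event of entering the absorbing state $\mathsf d$. By the Markov property of the source, this restart in state $\estn$ is independent of everything that happened before the reception, so $p(\irt\given\estn) = \pr(Y \text{ is absorbed after exactly } \irt \text{ transitions} \mid Y_0 = \estn)$.

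For the PMF I would decompose the event $\{\Irt = \irt\}$, $\irt \geq 1$, as: the chain remains in $\{0,1\}$ for its first $\irt-1$ transitions and then jumps to $\mathsf d$ on transition $\irt$. Conditioning on the transient state $j \in \{0,1\}$ occupied after $\irt-1$ steps, the probability of reaching $j$ with no prior absorption is the $(\estn,j)$ entry of $\mathbf A^{\irt-1}$ --- this uses the block-triangular form of $\mathbf P$, whose power $\mathbf P^{\ell}$ has top-left block $\mathbf A^{\ell}$ --- while the ensuing one-step absorption from $j$ has probability $(\mathbf a_{\mathsf d})_j$. Summing over $j$ gives $p(\irt\given\estn) = \mathbf e_{\estn}^{\mathsf T}\mathbf A^{\irt-1}\mathbf a_{\mathsf d}$, i.e.\ \eqref{eq:condPMfW}; the boundary case $\irt = 1$ is consistent since $\mathbf A^0 = \mathbf I_2$ and $q_{\estn\mathsf d} = (\mathbf a_{\mathsf d})_{\estn}$.

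For the mean, I would first note that $\mathbf A$ is substochastic with spectral radius strictly below one for any non-degenerate policy (any $\pTxVec \neq \mathbf 0$ with $\avgPTx < 1$, so that $\mathbf a_{\mathsf d}$ has a positive entry reachable from each transient state), which makes absorption almost sure and renders all the series and inverses below well defined. Then two equivalent routes are available: (i) sum directly, $\expOp[\Irt\given\Estn = \estn] = \sum_{\irt \geq 1}\irt\,\mathbf e_{\estn}^{\mathsf T}\mathbf A^{\irt-1}\mathbf a_{\mathsf d} = \mathbf e_{\estn}^{\mathsf T}(\mathbf I_2 - \mathbf A)^{-2}\mathbf a_{\mathsf d}$, and use that the rows of $\mathbf P$ sum to one, i.e.\ $\mathbf a_{\mathsf d} = (\mathbf I_2 - \mathbf A)\mathbf 1_2$, to collapse this into $\mathbf e_{\estn}^{\mathsf T}(\mathbf I_2 - \mathbf A)^{-1}\mathbf 1_2$; or (ii) invoke the standard fact that $(\mathbf I_2 - \mathbf A)^{-1}$ is the fundamental matrix of the absorbing chain, whose row sums give the expected times to absorption, which yields \eqref{eq:avgW} immediately.

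The only genuine subtlety, beyond the bookkeeping, is the well-posedness condition $\rho(\mathbf A) < 1$ together with the clean ``restart in state $\estn$'' argument for the initial condition; once these are in place, both claimed identities follow directly from the absorbing-chain structure of $\mathbf P$ already established in the text.
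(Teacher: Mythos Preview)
Your proposal is correct and follows essentially the same route as the paper: both identify $\Irt\mid\Estn=\estn$ with the absorption time of the terminating chain $Y_n$ started in the transient state $\estn$, and then invoke the standard phase-type distribution and fundamental-matrix formulas for absorbing Markov chains. You simply spell out more of the bookkeeping (the block-triangular power of $\mathbf P$, the identity $\mathbf a_{\mathsf d}=(\mathbf I_2-\mathbf A)\mathbf 1_2$, and the spectral-radius condition) that the paper compresses into a citation.
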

\begin{proof}
    The results follows by observing that the distribution of \Irt, conditioned on the period being characterized by an estimate value $\hat{x}$ at the receiver, corresponds to the absorption time for the auxiliary chain in \figr\ref{fig:markovChains}b when starting from $\hat{x}$. Note indeed that, counting the steps to absorption starting from state $i\in\{0,1\}$ is equivalent to assuming reception of a message at time $0$ \--- stating that the reference source is in state $i$ \---, and thus starting a period over which the sink will keep $i$ as estimate. In turn, the distribution of the absorption time can be obtained using standard methods for Markov chains \cite{Kemeny76}, leading to the discrete phase-type distribution reported in \eqref{eq:condPMfW}, and the corresponding average absorption time in \eqref{eq:avgW}.        
\end{proof}

The proposition allows to derive the statistics of the current AoI value and the stationary distribution of the estimate.
\begin{prop}\label{prop:condXn}
    Conditioned on the estimate available at the receiver, the current AoI follows the probability distribution
    \begin{align}
        p(\agen\given\estn) = \frac{1}{\mathbf e_{\estn}^{\mathsf T} (\mathbf I_2 - \mathsf A)^{-1} \, \mathbf 1_2} \cdot \sum_{w>\agen} p(w\given \estn).
        \label{eq:condPMFAge_complete}
    \end{align}
    \begin{proof}
        For a generic time instant $n$, let us indicate as $\Irt(n)$ the duration of the estimate inter-refresh period $n$ falls into. The probability that $\Irt(n)$ lasts for $w$ slots follows as
        \begin{align}
        \mathsf P \!\left( \Irt(n)=\irt \given \Estn=\estn \right) = \frac{\irt \, p(\irt\given\estn)}{\sum\nolimits_\irt \irt \, p(\irt\given\estn)}
        \label{eq:condProbW}
        \end{align}
        capturing the fraction of time spent by the system in estimate inter-refresh periods of duration $\irt$. 
        Recalling the definition of \Irt, we observe that the AoI of the reference source is $0$ at the beginning of an inter-refresh period, and grows linearly over time, reaching the maximum value of $\Irt-1$ at the start of the last slot of the interval. Therefore, the probability for the receiver to have an AoI $\Agen=\agen$ at a generic time instant $n$ falling into an inter-refresh period of duration $\Irt(n)=\irt$ is simply $1/\irt$, $\forall$ $\agen\in\{0,\dots,\irt-1\}$. Leveraging this, the conditional PMF $p(\agen\given\estn)$ can be conveniently computed as
        \begin{align}
            \begin{split}
            \!\!\!\!\!p(\agen\given\estn) &\stackrel{(a)}{=} \sum\nolimits_{\irt > \agen} \frac{1}{\irt} \cdot \mathsf P\left(\Irt(n) = \irt \given \Estn=\estn\right)\!. 
            \end{split}
            \label{eq:condPMFAge}
        \end{align}
        Within \eqref{eq:condPMFAge}, $(a)$ follows from the law of total probability, observing that the AoI can reach a value $\agen$ only over an inter-refresh period of length at least $\agen+1$ slots, and using the uniform conditional probability for the AoI that was just derived. Plugging in \eqref{eq:condProbW} and recalling \eqref{eq:avgW} leads after simple steps to the proposition statement.
    \end{proof}
\end{prop}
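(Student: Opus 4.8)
The plan is to build on a length\mbox{-}biased (inspection\mbox{-}paradox) argument over the sequence of estimate inter\mbox{-}refresh periods. The starting observation is that the estimate held by the receiver stays constant throughout an inter\mbox{-}refresh period \--- it is the content of the last decoded message \--- so conditioning on $\Estn=\estn$ at a generic slot $n$ amounts to restricting attention to those periods that carry estimate $\estn$, whose durations are i.i.d.\ with the law $p(\irt\given\estn)$ already obtained in \eqref{eq:condPMfW}.

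Three steps then follow. First, I would characterize the duration $\Irt(n)$ of the specific period into which a ``generic'' slot $n$ falls: in stationary conditions a uniformly chosen slot is proportionally more likely to land in a longer period, so $\Irt(n)$ is size\mbox{-}biased, $\pr(\Irt(n)=\irt\given\Estn=\estn)\propto \irt\,p(\irt\given\estn)$ with normalization $\expOp[\Irt\given\Estn=\estn]$; this can be made precise by an ergodic / renewal\mbox{-}reward count of the fraction of slots belonging to periods of each length over a long horizon. Second, I would use the deterministic evolution of the AoI inside a period: it starts at $0$ at the refresh instant and grows by one per slot, hence over a period of length $\irt$ it takes each value in $\{0,\dots,\irt-1\}$ exactly once, so conditioned on $\Irt(n)=\irt$ the current AoI is uniform on that set, $\pr(\Agen=\agen\given\Irt(n)=\irt)=1/\irt$ for $\agen<\irt$ and $0$ otherwise. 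Third, combining the two through the law of total probability \--- only periods longer than $\agen$ can produce AoI equal to $\agen$ \--- the uniform weight $1/\irt$ cancels the size\mbox{-}biasing factor $\irt$, leaving $p(\agen\given\estn)=\big(\sum_{\irt>\agen}p(\irt\given\estn)\big)\big/\expOp[\Irt\given\Estn=\estn]$; substituting $\expOp[\Irt\given\Estn=\estn]=\mathbf e_{\estn}^{\mathsf T}(\mathbf I_2-\mathsf A)^{-1}\mathbf 1_2$ from \eqref{eq:avgW} then gives \eqref{eq:condPMFAge_complete}.

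I expect the only genuine obstacle to be a clean derivation of the size\mbox{-}biasing identity for $\Irt(n)$ \--- i.e., that sampling a ``generic'' slot induces length\mbox{-}biased sampling of the inter\mbox{-}refresh durations \--- which requires either invoking stationarity of the underlying renewal structure or an explicit counting argument on a growing time window, with care taken to keep the conditioning on $\Estn=\estn$ consistent throughout (it selects a sub\mbox{-}class of periods but does not otherwise perturb the length\mbox{-}biasing within that class). The remaining manipulations \--- the within\mbox{-}period uniformity of the AoI and the telescoping cancellation of the $1/\irt$ factor \--- are routine bookkeeping.
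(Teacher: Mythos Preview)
Your proposal is correct and follows essentially the same route as the paper: size\mbox{-}biased sampling of the inter\mbox{-}refresh period containing a generic slot, uniformity of the AoI within a period of given length, and the law of total probability over $\irt>\agen$, with the $1/\irt$ and $\irt$ factors cancelling and $\expOp[\Irt\given\Estn=\estn]$ supplied by \eqref{eq:avgW}. The paper simply asserts the length\mbox{-}biased identity \eqref{eq:condProbW} as ``the fraction of time spent in periods of duration $\irt$'' without the ergodic/renewal justification you flag, so your discussion is, if anything, slightly more careful on that point.
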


\begin{prop}\label{prop:statEst}
    The stationary distribution of the receiver estimate for the reference source is given by    
    \begin{align}
        p(\estn) = \frac{\mathsf c_{\estn} \cdot \mathbb E[W \given \Estn=\estn]}{\mathsf c_{0} \cdot \mathbb E[W \given \Estn=0] + \mathsf c_{1} \cdot \mathbb E[W \given \Estn=1]}.
        \label{eq:pXn}
    \end{align}
    where 
    \begin{align}
        \mathsf c_0 = \frac{(\statZ (1-\qZO) \pTx_{00} + \statO \qOZ \pTx_{10})\ps}{\avgPTx \ps}, \quad  \mathsf c_1 = 1 - \mathsf c_0.
        \label{eq:c0}
    \end{align}
\end{prop}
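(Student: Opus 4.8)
The plan is to exploit the renewal structure that the sequence of successful receptions from the reference node imposes on the receiver estimate. Between two consecutive such receptions, the estimate $\Estn$ in \eqref{eq:dh} stays frozen at the value carried by the earlier message; these ``inter-refresh periods'' partition the time axis, each one labelled by a value in $\{0,1\}$ and of random length $\Irt$ whose conditional law is \eqref{eq:condPMfW}. Since $p(\estn)$ is, by definition, the long-run fraction of slots in which the estimate equals $\estn$, and the overall system is a finite irreducible Markov chain under the model of \secr\ref{sec:sysModel}, it suffices to compute this fraction through a standard reward argument on the embedded renewal process: the fraction of time labelled $\estn$ equals, up to normalisation, the product of the fraction of periods carrying $\estn$ and the mean length of such a period.

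First I would pin down the label statistics. A reference-node transmission at slot $n$ carries the current source state $\mcn$ and occurs with probability $\pTx_{\mc_{n-1}\mcn}$, which depends only on the transition $(\mc_{n-1},\mcn)$. Evaluating this in stationarity, i.e., with $\mc_{n-1}$ distributed as $(\statZ,\statO)$, gives $\pr(\text{slot }n\text{ carries a reference transmission of value }0) = \statZ(1{-}\qZO)\pTx_{00} + \statO\qOZ\pTx_{10} =: a_0$ and, likewise, $a_1 = \statZ\qZO\pTx_{01} + \statO(1{-}\qOZ)\pTx_{11}$, with $a_0 + a_1 = \avgPTx$. Under the approximation in \eqref{eq:ps}, each such transmission is decoded independently with probability $\ps$, irrespective of the carried value, so the per-slot rate of successful receptions labelled $\estn$ is $a_{\estn}\ps$, and the fraction of refreshes carrying value $\estn$ is $a_{\estn}\ps / (\avgPTx\ps) = \mathsf c_{\estn}$, which is precisely \eqref{eq:c0}.

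It then remains to weight each class of periods by its mean duration. Over a horizon of $T$ slots, about $T\, a_{\estn}\ps$ periods labelled $\estn$ occur, each holding the estimate equal to $\estn$ for $\expOp[\Irt\given\Estn=\estn]$ slots on average (given by \eqref{eq:avgW}); hence the fraction of time with estimate $\estn$ is proportional to $\mathsf c_{\estn}\,\expOp[\Irt\given\Estn=\estn]$, and normalising over $\estn\in\{0,1\}$ delivers \eqref{eq:pXn}.

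The step I expect to be the main obstacle is justifying this renewal-reward argument despite the statistical dependence between successive inter-refresh periods: the label of a reception depends on the source state at that slot, which is correlated with the preceding period. The resolution is to invoke the ergodicity of the Markov renewal process rather than independence --- what is actually needed is only that the fraction of refreshes carrying a given value equals $\mathsf c_{\estn}$ (established above from the stationary source law together with the decoding-independence assumption built into \eqref{eq:ps}) and that the time-average of the indicator $\mathbf 1[\Estn=\estn]$ converges, both guaranteed by the finiteness and irreducibility of the underlying chain.
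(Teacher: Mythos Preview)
Your proposal is correct and follows essentially the same route as the paper: both identify $\mathsf c_{\estn}$ as the fraction of inter-refresh periods carrying label $\estn$, and then obtain $p(\estn)$ as the long-run fraction of time spent in such periods, i.e., $\mathsf c_{\estn}\,\expOp[\Irt\given\Estn{=}\estn]$ up to normalization. The only difference is that you are more explicit about the renewal-reward structure and flag the dependence between consecutive periods, which the paper simply leaves implicit.
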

\begin{proof}
    We start by observing that an inter-refresh period is characterized by having an estimate $\Estn=0$ with probability $\mathsf c_0$ reported in \eqref{eq:c0}. Here, the numerator captures the probability for the source to be in $0$ and to successfully deliver an update, i.e., $\statZ(1-\qZO)\pTx_{00}\ps$, or in $1$, transition to $0$ and inform the receiver, i.e., $\statO\qOZ\pTx_{10}\ps$. In turn, the denominator is a normalizing factor that accounts for the overall probability of delivering an update, i.e., initiating a new inter-refresh period. Similarly, the probability of having an inter-refresh interval with $\Estn=1$ is simply described by the auxiliary variable $\mathsf c_1 = 1-\mathsf c_0$. Leaning on this, the stationary distribution of $\Estn$ can be expressed as the fraction of time the receiver spends with such estimate value, obtaining the expression in \eqref{eq:pXn}.
\end{proof}

To conclude, the joint PMF $p(\agen,\estn)$ can be computed using \eqref{eq:condPMFAge_complete} and \eqref{eq:pXn}, eventually providing a complete analytical characterization of the conditional entropy $H(\Mcn\given\Agen,\Estn)$.


\section{Results and Discussion}
\label{sec:results}

We focus on three main strategies, labeled \emph{random}, \emph{reactive}, and \emph{balanced}. The first is obtained by setting $\pTx_{\mc_{n-1}\mcn} = 1/\nodes$, and corresponds to an approach that maximizes the throughput, transmitting regardless of the evolution of the tracked sources. In turn, the reactive scheme fully ties access to source transitions, sending an update if a state change occurs, and remaining silent otherwise, i.e., $\pTxVec=[0,1,1,0]$. Finally, we denote as balanced an approach in which \pTxVec\ has been optimized to minimize $H(\Mcn\given\Agen,\Estn)$.

As a preliminary step, the tightness of the approximation in \eqref{eq:ps} is verified in \figr\ref{fig:prob_verification}. Here, markers denote the calculation of \mbox{$\pr(\Mcn{=}0\given\agen,\Estn{=}0)$} and $\pr(\Agen{=}\agen,\Estn{=}0)$ derived via \eqref{eq:pmfXnGivenDeltaXnHat}, \eqref{eq:condPMFAge_complete}, \eqref{eq:pXn}, whereas solid lines are the outcome of simulations which jointly track the behavior of each node, characterizing the success probability exactly.

\begin{figure}
    \subfloat[]{
        \includegraphics[width=.49\columnwidth]{./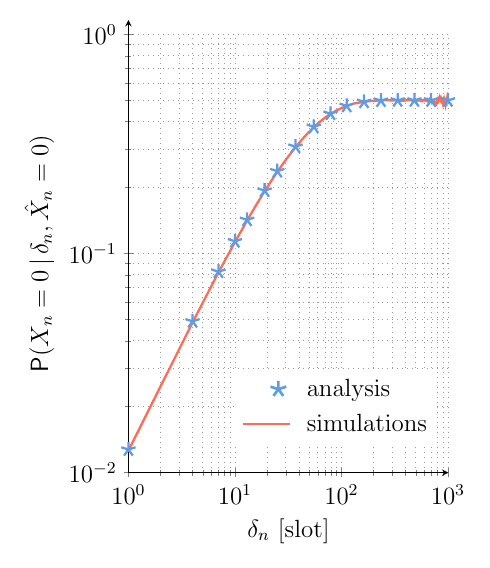}}
    \hspace*{.3em}
    \subfloat[]{
        \includegraphics[width=.49\columnwidth]{./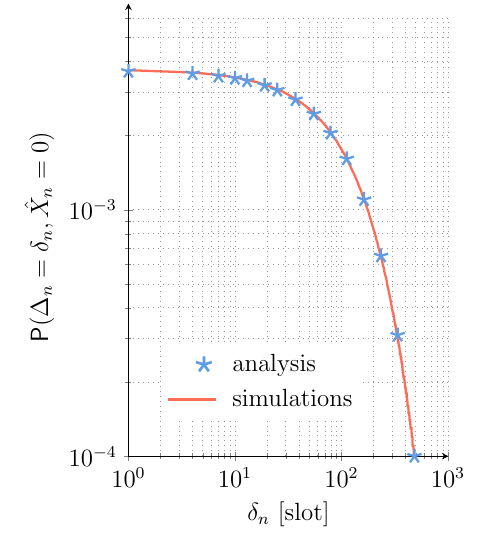}}
    \caption{Comparison of analysis (under the approximation \eqref{eq:ps}) and simulation (considering the exact interference behavior of the network). Results obtained for $\nodes=50$ nodes, $\qZO=\qOZ=0.02$, $\pTxVec = [0,1,1,0]$.}
    \label{fig:prob_verification}
\end{figure}

We then start by studying symmetric sources ($\eta=1$). Considering $\alpha{=}\beta{=}0.02$, Fig.~\ref*{fig:pmfH} reports the cumulative distribution function of the uncertainty at the receiver at a generic point in time, given the current value of AoI and estimate, obtained analytically as $\pr(\hspace{.1em}\condent(\agen,\estn){\leq} \zeta )= \!\!\sum_{\mathcal A} p(\agen,\est)$, where \mbox{$\mathcal A = \{(\agen,\estn) \!\!: \condent(\agen,\estn)\leq \zeta)\}$}.
Clearly, the larger the number of nodes, the higher the uncertainty, as updates from a specific source are delivered more sporadically. More interestingly, the use of a reactive strategy reduces the receiver uncertainty. From this standpoint, although transmitting only in case of state change may leave the receiver with an incorrect estimate for longer times compared to randomly sending updates (see, e.g., \cite{Munari24_ICC}), the effect is more than counterbalanced by the beneficial impact of tying channel access to the source evolution. A solid intuition for this can be obtained considering the simple case $\nodes=1$. If a reactive strategy is employed, the receiver retains certain knowledge on source state as soon as the first message is received ($\condent(\agen,\estn) = 0$). Conversely, if the source transmits randomly, a change of state might have occurred without the receiver being notified, leading to an uncertainty that grows over time until a new update is received. The effect holds also as more nodes contend, with the random policy exhibiting a steeper rise over time of $\condent(\agen,\estn)$, as illustrated in the example of Fig.~\ref{fig:timeline_comparison}.

\begin{figure}
    \centering
    \includegraphics[width=.85\columnwidth]{./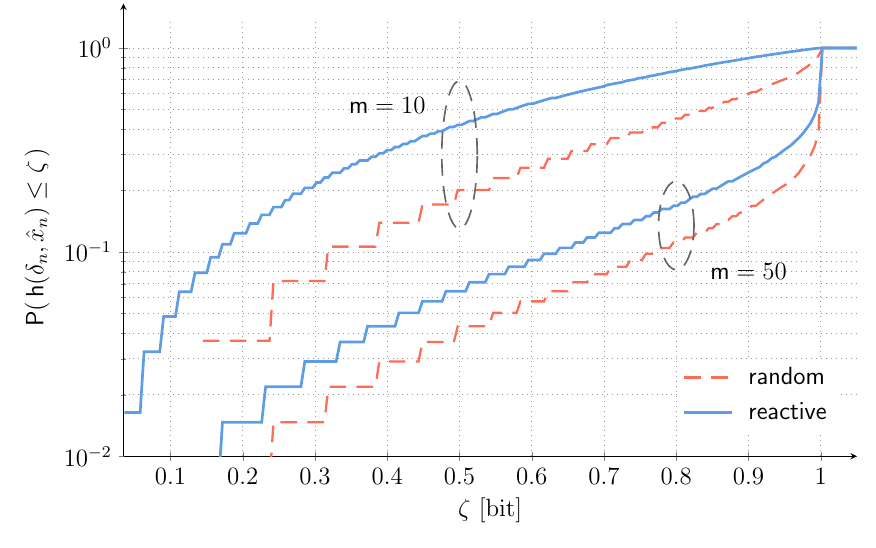}
    \vspace{-1em}
    \caption{CDF of $\condent(\agen,\est)$, considering symmetric sources with $\alpha=0.02$.}
    \label{fig:pmfH}
    \vspace{-1em}
\end{figure}

\begin{figure}
    \centering
    \includegraphics[width=.9\columnwidth]{./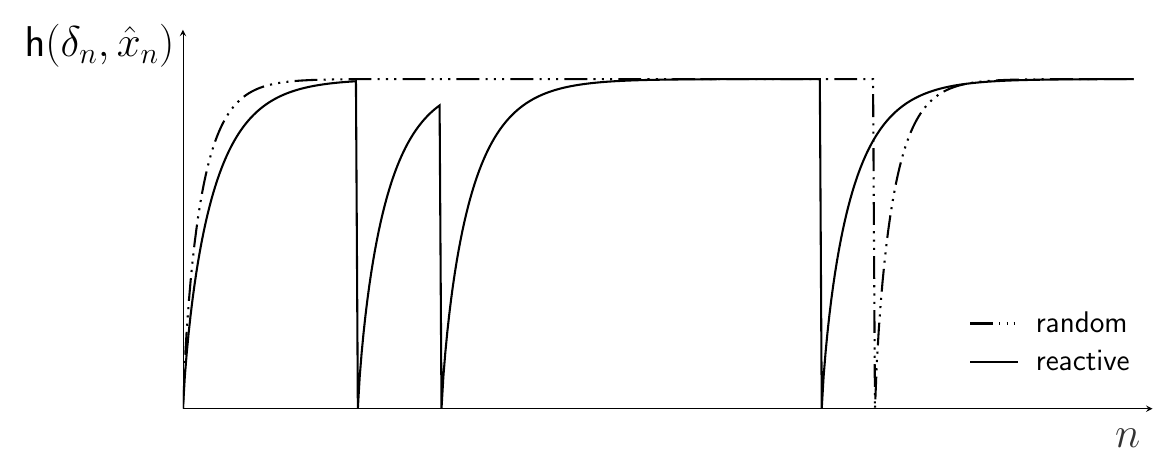}
    \caption{Example of time evolution of the entropy $\condent(\agen,\estn)$ for symmetric sources, with $\qZO=0.02$, $\nodes=50$. For the random strategy (dash-dotted lines), the transmission probability has been set to $\pTx=1/\nodes$. The steeper rise of the receiver uncertainty of this approach compared to the reactive one (solid line) is evident at the very beginning, as well as after the second reset of $\condent(\agen,\estn)$ for the random scheme.}
    \vspace{-1em}
    \label{fig:timeline_comparison}
\end{figure}

To further elaborate, \figr\ref*{fig:symmetric}a reports $H(\Mcn\given\Agen,\Estn)$ as a function of the number of nodes. As discussed, for large populations, the average uncertainty converges for all policies to the stationary entropy of the source $\mathsf H(\Mc)$. Focus first on the random and balanced schemes. For the latter, the optimized transmission probabilities are shown by the circle-marked line in \figr\ref*{fig:symmetric}b, referring to the left $y$-axis, and compared for reference to the $\lambda{=}1/\nodes$ access probability of the random scheme. The numerical optimization leads to $\pTx_{00}{=}\pTx_{11}{=}0$ regardless of \nodes, i.e., a node shall send no update if the source does not transition. On the other hand, a pure reactive solution is optimal ($\pTx_{01}{=}\pTx_{10}{=}1$) as long as the channel is not congested. This is confirmed by the green dash-dotted curve, referred to the right $y$-axis, showing the  average number of incoming packets per slot at the receiver (i.e., channel load). As \nodes\ increases, having nodes transmit only in case of state change, yet with probability smaller than $1$, is convenient to curb collisions. These  trends pinpoint a trade-off between throughput (maximized for load $1$ pkt/slot) and uncertainty at the receiver, with the latter improved by transmitting fewer yet more informative packets. This is further buttressed by the yellow, dash-dotted curve in \figr\ref*{fig:symmetric}, outlining a modification of the plain reactive scheme in which nodes transmit also in case of no source change, with probability chosen such that the overall average channel load is always $1$ pkt/slot, to maximize throughput. An increase in the receiver uncertainty is apparent.

\begin{figure}
    \subfloat[]{
        \includegraphics[width=.48\columnwidth]{./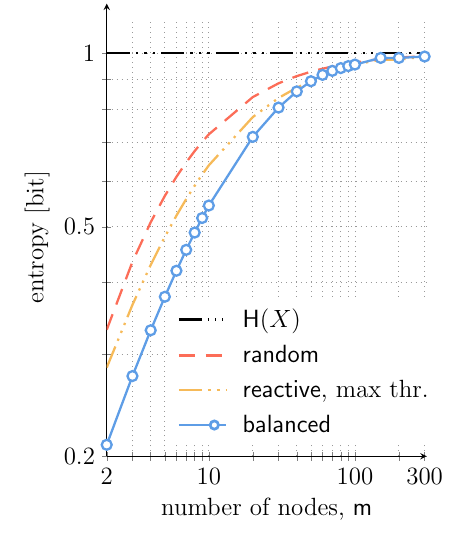}}
    \hspace*{.1em}
    \subfloat[]{
        \includegraphics[width=.5\columnwidth]{./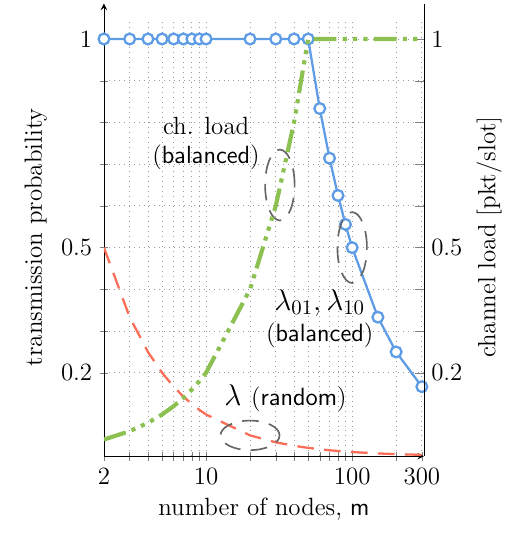}}
    \caption{(a) $\ent(\Mcn\given\Agen,\Estn)$ vs number of nodes, symmetric sources ($\qZO=0.02$); (b) transmission probabilities (random and balanced strategies).}
    \label{fig:symmetric}
    \vspace{-1em}
\end{figure}

\begin{figure}
    \subfloat[]{
        \includegraphics[width=.45\columnwidth]{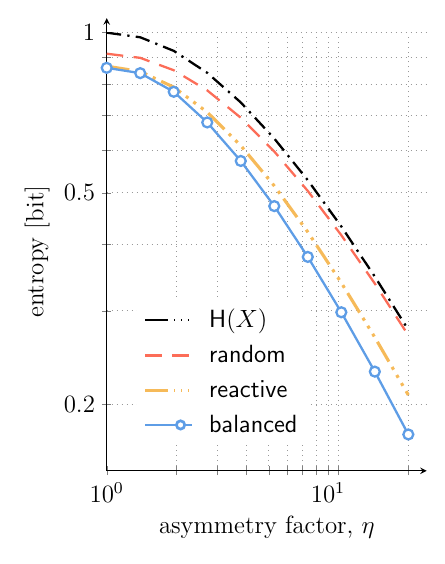}}
    \hspace*{.1em}
    \subfloat[]{
        \includegraphics[width=.47\columnwidth]{./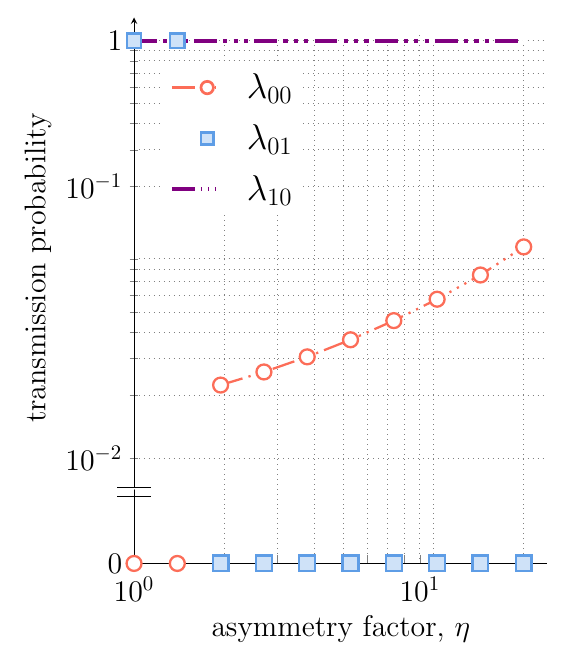}}
    \caption{(a) $\ent(\Mcn\given\Agen,\Estn)$ vs asymmetry factor $\eta$; (b) optimized transmission probabilities for the balanced strategy. $\nodes=50$; source parameters set to have on average $\nodes(\pi_0\alpha {+} \pi_1\beta){=}0.8$ state transitions per slot in the network.}
    \label{fig:asymm}
    \vspace{-1em}
\end{figure}

The situation changes significantly when asymmetric sources are to be monitored. To study this, we report in \figr\ref*{fig:asymm}a $H(\Estn\given\Agen,\Estn)$ against the asymmetry factor $\eta$ for $\nodes{=}50$ nodes. The source parameters $(\alpha$, $\beta)$ have been set such that, for any $\eta$ the average number of state transitions in the network is constant, i.e., $\nodes(\pi_0\alpha {+} \pi_1\beta){=}0.8$. As expected, as $\eta$ increases, sources spend more time in one state (i.e, $1$), leading to lower uncertainty. Two facts shall be stressed, though. First, resorting to a pure reactive policy is once more be preferred to the use of a random solution, all the more so with high asymmetry. However, in contrast to what discussed for the symmetric case, a solely reactive approach no longer minimizes the uncertainty, as done by the balanced scheme. This is clarified in \figr\ref{fig:asymm}b, reporting the optimized transmission probabilities ($\lambda_{11}{=}0$ in all cases, is not shown for clarity). To reduce uncertainty, a transition to the less visited state $0$ shall be immediately notified ($\pTx_{01}{=}1$). As asymmetry grows, updates shall also be sent when the source remains in such state ($\pTx_{00}$ rising with $\eta$), compensating for a possible loss of the initial notification due to collision, and informing the receiver of the less likely conditions being experienced. Such messages become more important than notifying a return to the most common state, and $\lambda_{01}$ goes to $0$ to reduce congestion.

\section{Conclusions}
For remote monitoring of two-states Markov sources over random access channels, we characterized analytically the uncertainty of a simple estimator which retains the last received value as current estimate. Our study highlighted important trade-offs, showing that reactive schemes are to be preferred in the case of symmetric sources, even at the expense of operating the system at lower throughput. Instead, when asymmetric sources are to be tracked, a hybrid solution that also foresees transmissions in absence of a state transition can lead to lower entropy. In all cases, the presented framework allows for a simple optimization, providing useful protocol design hints.

\bibliographystyle{IEEEtran}
\bibliography{IEEEabrv,biblio_RandomAccess,biblio_AoI}

\flushend

\end{document}